\documentclass[11pt]{article}
\usepackage[utf8]{inputenc}
\usepackage[left=1in, right=1in, bottom=1.15in, top=1.1in]{geometry}

\usepackage[dvipsnames]{xcolor}

\usepackage{amsmath,amssymb}
\usepackage{mathtools}
\usepackage{hyperref}
\hypersetup{%
  colorlinks=true,
  citecolor=MidnightBlue,
  linkcolor=MidnightBlue,
  urlcolor=MidnightBlue,
}
\usepackage{url}
\usepackage{hyperref}
\usepackage{bbm}
\usepackage{cleveref}

\usepackage[vlined,ruled,linesnumbered]{algorithm2e} 
\SetKwRepeat{Do}{do}{while}

\usepackage{verbatim} 
\usepackage{multirow}
\usepackage{amsthm} 
\newtheorem{theorem}{Theorem}
\newtheorem{corollary}{Corollary}
\newtheorem{lemma}{Lemma}

\newtheorem{definition}{Definition}

\newtheorem{axiom}{Axiom}
\newtheorem{remark}{Remark}

\usepackage{threeparttable}
\usepackage{graphicx}
\usepackage{subfigure}
\usepackage{tikz}
\usetikzlibrary{automata, positioning, arrows}
\tikzset{
box/.style ={
rectangle, 
rounded corners =1pt, 
minimum width =50pt, 
minimum height =20pt, 
inner sep=5pt, 
draw=black, 
align = center
}}

\newcommand{\calD}{\mathcal{D}}

\newcommand{\calX}{\mathcal{X}}
\newcommand{\calY}{\mathcal{Y}}

\def\TX{\textsf{TX}}

\def\Sell{\texttt{Sell}}
\def\GSR{\texttt{GSR}}
\def\AP{\texttt{AP}}


\title{MEV Makes Everyone Happy under Greedy Sequencing Rule}

\author{
Yuhao Li\footnote{Equal contribution.}
\\\textit{Columbia University}\\ \texttt{yuhaoli@cs.columbia.edu} \hspace{-0.42cm}
\and Mengqian Zhang$^*$ \\\textit{New York University} \\
\texttt{mengqian.zhang@stern.nyu.edu}\hspace{-0.42cm}
\and Jichen Li$^*$ \\\textit{Peking University}\\ \texttt{limo923@pku.edu.cn}
\vspace{0.2cm}
\and Elynn Chen \\\textit{New York University}\\ \texttt{elynn.chen@stern.nyu.edu}
\and Xi Chen \\\textit{New York University}\\ \texttt{xc13@stern.nyu.edu}
\and Xiaotie Deng \\\textit{Peking University}\\ \texttt{xiaotie@pku.edu.cn}
}

\date{}

\begin{document}

\maketitle
\begin{abstract}
Trading through decentralized exchanges (DEXs) has become crucial in today's blockchain ecosystem, enabling users to swap tokens efficiently and automatically.
However, the capacity of miners to strategically order transactions has led to exploitative practices (\textit{e.g.}, front-running attacks, sandwich attacks) and gain substantial Maximal Extractable Value (MEV) for their own advantage. To mitigate such manipulation, Ferreira and Parkes recently proposed a greedy sequencing rule such that the execution price of transactions in a block moves back and forth around the starting price. Utilizing this sequencing rule makes it impossible for miners to conduct sandwich attacks, consequently mitigating the MEV problem.  

However, no sequencing rule can prevent miners from obtaining risk-free profits. This paper systemically studies the computation of a miner's optimal strategy for maximizing MEV under the greedy sequencing rule, where the utility of miners is measured by the overall value of their token holdings. Our results unveil a dichotomy between the no trading fee scenario, which can be optimally strategized in polynomial time, and the scenario with a constant fraction of trading fee, where finding the optimal strategy is proven NP-hard. The latter represents a significant challenge for miners seeking optimal MEV.

Following the computation results, we further show a remarkable phenomenon: Miner's optimal MEV also benefits users. Precisely, in the scenarios without trading fees, when miners adopt the optimal strategy given by our algorithm, all users' transactions will be executed, and each user will receive equivalent or surpass profits compared to their expectations. This outcome provides further support for the study and design of sequencing rules in decentralized exchanges.

\vspace{0.5cm}
\noindent\textbf{Keywords:} Decentralized Finance, Maximal Extractable Value, Sequencing Rule
\end{abstract}

\thispagestyle{empty}
\newpage 
\setcounter{page}{1}

\section{Introduction}
Decentralized finance (also known as DeFi), as the main application of blockchain and smart contracts, has grown incredibly popular and attracted more than 40 billion dollars~\cite{defillama}. Within the DeFi ecosystem, decentralized exchange (DEX) becomes a fundamental service that allows users to trade cryptocurrency directly without any centralized authority. Nowadays, the daily volume of these DEXs has reached billions of US dollars~\cite{defi-tracker}.

Most DEXs (\textit{e.g.}, Uniswap, SushiSwap, Curve Finance, and Balancer) are organized as constant function market makers (CFMMs). Uniswap~\cite{adams2021uniswap}, for example, utilizes a constant product formula to make sure that the product of the quantity of two tokens remains constant before and after a swap. The exchange rate, or say the price that the swap executes at, is automatically determined by the reserves of the pair. So the outcome of each trade is sensitively influenced by system status at execution time.

In the blockchain, it is the block builders (also referred to as miners or validators) that select pending transactions and specify their execution order. This gives an exploitable chance for miners to extract profit by strategically including, excluding, and reordering transactions in a block. This is known as \textit{Maximal Extractable Value} (MEV)~\cite{Daian20flash}. A prevalent MEV example is the \textit{sandwich attack}~\cite{zhou21high} on DEX transactions: the attacker ``sandwiches'' a profitable victim transaction by front-running and back-running it and earns from the spread between buying and selling prices.

To mitigate this market manipulation by miners, Ferreira and Parkes~\cite{xavier2023credible} recently introduced a \textit{greedy sequencing rule}. Simply put, when dealing with a bunch of transactions from the liquidity pool of tokens $\calX$ and $\calY$, this sequencing rule requires miners to take the starting price as a benchmark. Then at any point during the execution in the block, if the current price of $\calY$ is higher than the benchmark, the priority should be given to the transactions selling token $\calY$. Conversely, the transactions selling token $\calX$ should be executed next. This sequencing rule structurally makes the sandwich attack impossible. It restricts miners from manipulating transaction orders, thus mitigating the impact of MEV. More importantly, it introduces verifiability by allowing users to efficiently verify whether the execution order of transactions complies with the rule.

\subsection{Our Contributions}
As mentioned in~\cite{xavier2023credible}, miners can always obtain risk-free profits in some cases under arbitrary sequencing rule. In this paper, we systematically study the computation of miner's optimal MEV strategy under the greedy sequencing rule. The study is based on the utility model where the worth of miners is the overall \textit{value} of all their tokens. Like the similar work~\cite{bartoletti22maximizing} aiming to maximize extractable value without rules or limits, the value of a token is measured by its price, which is exogenous, given by an oracle, and fixed throughout the attack. It was explicitly emphasized by Ferreira and Parkes~\cite{xavier2023credible} to also consider miner's utility as a real-valued function when studying sequencing rules. The monetary function we considered is arguably the most natural choice.

We highlight our results on the computation of miners' optimal strategies, as well as their surprising consequences. We give a computation dichotomy, supported by our two main theorems (\Cref{theorem: polytime algorithm f=0} and \Cref{theorem: NP hard when f>0}). 
For the scenario where there is no trading fee, a polynomial time algorithm for a miner to compute an optimal strategy is given (\Cref{theorem: polytime algorithm f=0}); In contrast, when the fraction of trading fees is any constant larger than 0 (\textit{e.g.}, $f=0.3\%$ in most Uniswap pools), we prove it is NP-hard to find an optimal strategy (\Cref{theorem: NP hard when f>0}). 

The computational intractability implies hardness for a miner to hope for optimal MEV. More surprisingly, in the $f=0$ regime, when miners adopt the optimal strategy provided by our algorithm (Algorithm~\ref{alg: f=0}), users will also benefit in the following sense: all users' transactions will be executed (\Cref{corollary: liveness}), and every user gets at least as good as if their transaction was the only transaction in the block (\Cref{corollary: good profit}). The latter was one of the main motivations to propose the greedy sequencing rule, even though it is generally not true when the miner \emph{truthfully} follows the greedy sequencing rule.

We conclude this paper by discussing many interesting future directions and open problems in the last section (\Cref{sec: discussion}).

\subsection{Related Work}

\subsubsection{Sequencing Rules}
Typically, miners organize transactions based on their gas prices. In order to protect users from order manipulation, Kelkar \textit{et al.}~\cite{Kelkar0GJ20} investigate the notion of \textit{fair transaction ordering} for Byzantine consensus, which is further extended to the permissionless setting in~\cite{KelkarDK22Order-Fair}. Cachin \textit{et al.}~\cite{CachinMSZ22} introduce a new \textit{differential order-fairness property} and present the quick order-fair atomic broadcast protocol which is much more efficient than previous solutions. The general idea of these approaches is to rely on a committee rather than a single miner to order transactions. A main threat to fair transaction ordering is the \textit{Condorcet attack}~\cite{vafadar2023condorcet}. Vafadar and Khabbazian~\cite{vafadar2023condorcet} show that an attacker can undermine fairness by imposing
Condorcet cycles even when all others in the system behave honestly. 

Another category is \textit{content-oblivious ordering}~\cite{MalkhiS22, Sikka} which guarantees that the transaction data is not accessible to the committee responsible for sequencing them until an order has been determined. This could be achieved using methods like threshold public key encryption schemes.

\subsubsection{MEV Mitigation}
It has long been discovered that miners could exploit transaction ordering for their own benefit ~\cite{bracciali19transparent}. The term Maximal Extractable Value (MEV) was introduced in~\cite{Daian20flash}, formally defined in~\cite{BabelDKJ23}, and its growth has resulted in network congestion and high gas prices~\cite{Daian20flash, kulkarni22towards}. Besides the sequencing rules, some other approaches are also explored to mitigate the impact of MEV. To avoid sandwich attacks, users are suggested to reduce the trading volume by splitting transactions~\cite{zust2021analyzing} and to restrict the slippage tolerance~\cite{heimbach22eliminating}. This method, however, may also increase the transaction costs for users. Zhou \textit{et al.}~\cite{zhou21a2mm} propose a new DEX design called A$^2$MM, which helps users to immediately execute an arbitrage following their swap transactions. It also allows users to benefit from MEV atomically.
Another popular way is to rely on the service from trusted third parties like flashbots~\cite{flashbots}, Eden~\cite{Eden}, and OpenMEV~\cite{OpenMEV}. Then can help to order transactions without broadcasting them to the whole network, thus protecting from front-running and sandwich attacks.

\section{Preliminaries}
\subsection{Constant Function Market Makers}
Let $A$ be an AMM for trading between token $\calX$ and token $\calY$. The exchange has \emph{state} $s=(x,y)$, where $x$ and $y$ are current reserves of token $\calX$ and $\calY$, respectively.
When $A$ is a CFMM, the trading invariant can be modeled by a constant function with two variables $F(x,y)=C$. We will focus on CFMMs that satisfy Axiom~\ref{axiom: monotone} and Axiom~\ref{axiom: marginal exchange rate decreases}, which are defined as follows. We note that all currently known CFMMs are consistent with these two properties.\footnote{This also includes Uniswap v3, which is less trivial.}

\begin{axiom}\label{axiom: monotone}
    For different pairs $(x,y)$ and $(x',y')$ such that $F(x,y)=F(x',y')=C$, we have $x<x'$ if and only if $y>y'$.
\end{axiom}
By this axiom, we know that for any $x$ (reserves of token $\calX$), there is a unique $y$ such that $F(x,y)=C$ and vice versa. So we will use $F_y(x)$ to denote the $y$ such that $F(x,y)=C$ and similarly define $F_x(y)$.

\begin{axiom}\label{axiom: marginal exchange rate decreases}
    $F_y(x)$ is differentiable and the marginal exchange rate $|dF_y(x)/dx|$ is decreasing with respect to $x$.
\end{axiom}

In the rest of the paper, we use $r(x)$ to denote the \textit{marginal exchange rate} of swapping tokens $\calX$ for $\calY$, \textit{i.e.}, $r(x)\coloneqq |dF_y(x)/dx|$.

\subsection{Execution of Transactions}

Users can submit a transaction of the following two types: $\Sell(\calX,\cdot)$ and $\Sell(\calY,\cdot)$, where $\cdot$ is a real parameter representing how many units of token the user wants to trade. 

To be more concrete, suppose that the current state of CFMM $A$ is $s=(x,y)$. For each swap, part of tokens are charged as fees and we use $f\in[0,1)$ to denote the fraction of this trading fee. When executing a transaction $\Sell(\calX,q)$, the user will pay $q$ units of token $\calX$ and get $y-F_y(x+(1-f)q)$ units of token $\calY$. Similarly, when executing a transaction $\Sell(\calY, q)$, the user will pay $q$ units of token $\calY$ and get $x-F_x(y+(1-f)q)$ units of token $\calX$.

The executing of multiple transactions $\{\TX^i\}_{i\in[n]}$ will be well-defined if an order among them is determined. In particular, suppose that $\tau:[n]\rightarrow[n]$ is a permutation. Then the execution will work as follows: Let $s_0=(x_0,y_0)$ be the initial state and iteratively execute each transaction $\TX^{\tau(i)}$. For the $i$-th iteration, if $\TX^{\tau(i)}=\Sell(\calX,q)$, then $s_i=(x_i,y_i)$ where $x_i=x_{i-1}+(1-f)q$ and $y_i=F_y(x_i)$; if $\TX^{\tau(i)}=\Sell(\calY,q)$, then $s_i=(x_i,y_i)$ where $y_i=y_{i-1}+(1-f)q$ and $x_i=F_x(y_i)$.

It is easy to see the order under which the transactions are executed crucially influences the trades outcomes. However, due to the same reason, it is also well-known that the decentralized exchange systems suffer from \emph{order manipulation}, where an anonymous miner can manipulate the context of a block, even including inserting their own attacking transactions. Ferreira and Parkes~\cite{xavier2023credible} considered the notion of \emph{verifiable sequencing rules} and proposed a greedy sequencing rule to limit miners' ability to manipulate (therefore in general it also benefits users). We recap their definitions below.

\subsection{Sequencing Rules}
We start with the definition of the verifiable sequencing rule.

\begin{definition}[Verifiable sequencing rule, \cite{xavier2023credible}]
    A sequencing rule $R$ is a map from initial state $s_0$ and a set of transactions $\{\TX^i\}_{i\in[n]}$ to a set of permutations $\{\tau:[n]\rightarrow[n]\}$, where each permutation is a valid order to execute these transactions under this sequencing rule.
    
    A sequencing rule is \emph{efficiently computable}, if there is a polynomial time algorithm that can compute a permutation $\tau:[n]\rightarrow[n]$ that satisfies $R$ (\textit{i.e.}, $\tau\in R(s_0,\{\TX^i\}_{i\in[n]})$) for any initial state $s_0$ and transactions $\{\TX^i\}_{i\in[n]}$.
    
    A sequencing rule is \emph{efficiently verifiable}, if there is a polynomial time algorithm such that for any permutation $\tau:[n]\rightarrow[n]$, the algorithm accepts $\tau$ if and only if $\tau\in R(s_0,\{\TX^i\}_{i\in[n]})$.
\end{definition}

Along this way, Ferreira and Parkes~\cite{xavier2023credible} proposed a greedy sequencing rule (we use $\GSR$ to denote it), which is efficiently computable and verifiable.
\begin{definition}[Greedy sequencing rule, \cite{xavier2023credible}]
     A permutation $\tau$ satisfies the greedy sequencing rule ($\tau\in \GSR(s_0,\{\TX^{i}\}_{i\in[n]})$) if the following conditions hold for all $i\in[n]$:
     \begin{itemize}
         \item $\TX^{\tau(i)}$ is a $\Sell(\calX,\cdot)$ transaction only if \textit{either} $x_{i-1}\leq x_0$ \textit{or} $\TX^{\tau(j)}$ is $\Sell(\calX,\cdot)$ for all $i< j\leq n$; and
         \item $\TX^{\tau(i)}$ is a $\Sell(\calY,\cdot)$ transaction only if \textit{either} $y_{i-1}\leq y_0$ \textit{or} $\TX^{\tau(j)}$ is $\Sell(\calY,\cdot)$ for all $i< j\leq n$,
     \end{itemize}
     where $s_{i-1}=(x_{i-1},y_{i-1})$ is the state before executing $\TX^{\tau(i)}$.
\end{definition}

Besides efficiency, the greedy sequencing rule enjoys the property that for every transaction, \textit{either} its receive is as good as it was the only transaction in the block \textit{or} it does not suffer from a sandwich attack.

However, it is totally possible for a miner to gain profits by manipulating the content of the block, even if it follows some given sequencing rule (e.g., the greedy sequencing rule). In the rest of the paper, we study the computation of miners' optimal strategies.

\section{Miner's Strategy Space}

We define the miner's strategy space in the most general way. To make the profits of the miner comparable, we assume that there are exogenous prices of $\calX$ (denoted by $p_x$) and $\calY$ (denoted by $p_y$) and the miner wants to collect as much money as possible. Like previous work~\cite{bartoletti22maximizing}, $p_x$ and $p_y$ are assumed to remain the same during the attack (usually the timeslot for a block, \textit{e.g.}, about 12 seconds in Ethereum).

\begin{definition}[Strategy Space]
    Given a sequencing rule $R$, an initial state $s_0=(x_0,y_0)$, and a set of users' transactions $\{\TX^{i}\}_{i\in[n]}$, a miner could create $m$ number of its own transactions $\{\TX^i\}_{i\in[n+1:n+m]}$, select a subset of all these $n+m$ transactions $S\subseteq [n+m]$, compute an order $\tau\in R(s_0,\{\TX^i\}_{i\in S})$ (here instead of permutation, $\tau$ should be a one-to-one mapping from $[|S|]$ to $S$) that satisfies the sequencing rule, and execute them under the order $\tau$.

    The miner's profit $U(\{\TX^{i}\}_{i\in[n+1:n+m]},S,\tau)$ is defined as

    \begin{equation*}
        \sum_{i\in[|S|], \tau(i)\in [n+1:n+m]}
        \frac{x_{i-1}-x_i}{1-f\cdot \mathbbm{1}_{\{x_i > x_{i-1}\}}}\cdot p_x + \frac{y_{i-1}-y_i}{1-f\cdot \mathbbm{1}_{\{y_i > y_{i-1}\}}}\cdot p_y,
    \end{equation*}
    where $f\in[0,1)$ is the fraction of trading fees.
\end{definition}

Here, $\mathbbm{1}_{\{x_i > x_{i-1}\}}$ indicates that $\TX^{\tau(i)}$ is a $\Sell(\calX,\cdot)$ transaction and $\mathbbm{1}_{\{y_i > y_{i-1}\}}$ indicates that $\TX^{\tau(i)}$ is a $\Sell(\calY,\cdot)$ transaction. These two events will not happen simultaneously.

\subsection{Arbitrage-Free Interval}
In this subsection, we present a clean lemma that characterizes (what we call) arbitrage-free interval, which provides the first intuition behind the proofs later. It may also serve as the first step in other scenarios of decentralized exchanges when concerning the miner's strategies, \textit{e.g.}, optimal sandwich attacks of a miner who wants to collect money.

Before we state and prove the lemma, we first introduce a notation, which is also used in the subsequent sections. We use $L_x$ to denote the $x$ such that the marginal exchange rate $r(L_x)=\frac{1}{1-f}\frac{p_x}{p_y}$ and $R_x$ to denote the $x$ such that $r(R_x)=(1-f)\frac{p_x}{p_y}$.

\begin{lemma}\label{lemma: arbitrage free interval}
    Given the exogenous prices $p_x$ and $p_y$, and the current state $s^*=(x^*,y^*)$, miner's optimal profit is positive if and only if $x^*\not\in [L_x,R_x]$. Furthermore, when $x^*<L_x$, miner's optimal strategy is to execute $\Sell(\calX,(L_x-x^*)/(1-f))$; when $x^*>R_x$, miner's optimal strategy is to execute $\Sell(\calY,(F_y(R_x)-y^*)/(1-f))$.
\end{lemma}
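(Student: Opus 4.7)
The plan is to separate the analysis into a single-transaction optimization and a reduction showing that multi-transaction strategies cannot improve on the best single transaction.

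For a single $\Sell(\calX, q)$ at state $(x^*, y^*)$, the profit is $\pi_X(q) = -q p_x + \bigl(F_y(x^*) - F_y(x^* + (1-f)q)\bigr)p_y$, with $\pi_X'(q) = (1-f)p_y\, r(x^* + (1-f)q) - p_x$. Since $r$ is strictly decreasing (Axiom~\ref{axiom: marginal exchange rate decreases}), $\pi_X$ is concave in $q$, and its unconstrained optimum occurs where $r(x^* + (1-f)q) = p_x/((1-f)p_y) = r(L_x)$, i.e., $q^{\ast} = (L_x - x^*)/(1-f)$. This is positive (and gives strictly positive profit) iff $x^* < L_x$; otherwise $\pi_X$ is non-increasing on $q \geq 0$ and no single $\Sell(\calX,\cdot)$ is profitable. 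A symmetric analysis for $\Sell(\calY, q)$, using $dF_x/dy = -1/r$, gives the second claimed transaction, optimal iff $x^* > R_x$.

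For a multi-transaction strategy inducing the $x$-coordinate trajectory $x^* = A_0, A_1, \ldots, A_k$, I would rewrite each step's profit as an integral along the traversed interval of $g_X(t) := p_y r(t) - p_x/(1-f)$ (for forward $\Sell(\calX,\cdot)$ steps) or $g_Y(t) := p_x - p_y r(t)/(1-f)$ (for backward $\Sell(\calY,\cdot)$ steps). Collecting these into upward/downward crossing counts $n_\uparrow(t), n_\downarrow(t)$ at each level $t$, the total profit equals $\int (g_X n_\uparrow + g_Y n_\downarrow)\,dt$, and the telescoping identity $n_\uparrow(t) - n_\downarrow(t) = c(t)$ holds, where $c$ is the signed indicator of the interval between $A_0$ and $A_k$. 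A direct computation gives the key algebraic fact
\begin{equation*}
g_X(t) + g_Y(t) \;=\; -\tfrac{f}{1-f}\bigl(p_y r(t) + p_x\bigr) \;\leq\; 0,
\end{equation*}
so substituting $n_\uparrow = n_\downarrow + c$ and $n_\downarrow = n_\uparrow - c$ respectively yields
\begin{equation*}
\text{Profit} \;\leq\; \min\!\Bigl(\textstyle\int g_X(t)\,c(t)\,dt,\ -\!\int g_Y(t)\,c(t)\,dt\Bigr).
\end{equation*}
When $A_k \geq A_0$, the first bound equals $\int_{A_0}^{A_k} g_X\,dt$, which is exactly the profit of a single $\Sell(\calX,\cdot)$ from $A_0$ to $A_k$; when $A_k \leq A_0$, the second equals $\int_{A_k}^{A_0} g_Y\,dt$, the profit of a single $\Sell(\calY,\cdot)$. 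Hence every multi-transaction strategy is dominated by a single-transaction strategy with the same endpoint.

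Combining the two parts: if $x^* \in [L_x, R_x]$, the forward bound uses $g_X \leq 0$ on $[L_x,\infty)$ and the backward bound uses $g_Y \leq 0$ on $(-\infty,R_x]$, so no single-transaction and therefore no multi-transaction strategy yields positive profit; if $x^* \notin [L_x, R_x]$, the explicit single transaction in the lemma already achieves strictly positive profit and is globally optimal by the reduction together with the concavity argument from Step~1. I expect the main obstacle to be the crossing-number bookkeeping — translating each discrete transaction into an integral of $g_X$ or $g_Y$ and verifying $n_\uparrow - n_\downarrow = c$ for arbitrary trajectories — after which the identity $g_X + g_Y \leq 0$ transparently captures the economic intuition that fee-bearing round trips are always strictly unprofitable.
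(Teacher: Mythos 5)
Your proposal is correct, and while the single-transaction half coincides with the paper's calculation (the first-order condition $(1-f)p_y\,r(x^*+(1-f)q)=p_x$ placing the optimum at $L_x$, and its mirror at $R_x$, is exactly how the paper locates the arbitrage-free interval, via its function $g$ with $g'(L_x)=0$), your reduction from arbitrary multi-transaction strategies to a single transaction is a genuinely different and more rigorous route. The paper disposes of this step in two informal sentences: same-direction transactions merge into one equivalent transaction, and opposite-direction transactions are ``better'' replaced by a single net transaction because round trips pay fees twice. You instead encode every miner transaction as an integral of $g_X(t)=p_y r(t)-p_x/(1-f)$ or $g_Y(t)=p_x-p_y r(t)/(1-f)$ over the traversed $x$-interval, use the level-crossing identity $n_\uparrow(t)-n_\downarrow(t)=c(t)$, and exploit $g_X(t)+g_Y(t)=-\tfrac{f}{1-f}\left(p_y r(t)+p_x\right)\leq 0$ to show any trajectory is dominated by the single transaction with the same endpoint; this makes the ``fees kill round trips'' intuition a theorem rather than an assertion, works uniformly for $f=0$ (where it degenerates to an equality) and $f>0$, and as a byproduct gives a quantitative bound on the loss from any back-and-forth strategy. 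The costs are the extra bookkeeping you flag (verifying the step-profit integral formulas and the crossing identity for arbitrary piecewise-monotone trajectories), which the paper's shorter merging argument avoids, and both arguments share the same mild implicit assumptions (uniqueness of $L_x,R_x$ from the decreasing marginal rate, and strict positivity of the profit of the stated transaction when $x^*<L_x$, which follows since $g_X>0$ strictly below $L_x$). No gap: your outline, once the routine verifications are written out, proves the lemma.
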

\begin{proof}

    We first argue that it suffices for the miner to execute at most one transaction. This is because if miner executes two transactions with the same type (say $\Sell(\calX,q_1)$ and $\Sell(\calX,q_2)$), then it is equivalent to execute $\Sell(\calX,q_1+q_2)$; if miner executes two transactions with different types (say $\Sell(\calX,q_1)$ and $\Sell(\calY,q_2)$), then it is \emph{better} to replace them by one single transaction since miner can avoid additional cost of trading fees. 
    
    So next we consider the case where the miner executes one of its transactions $\TX$. Suppose that $\TX=\Sell(\calX,q)$, then miner's profit is 
    \[U(\calX,q)=\left(\int_{x^*}^{x^*+(1-f)q} r(x) dx\right)\cdot p_y-q\cdot p_x.\]

    We show below that when $x^*\geq L_x$, $U(\calX,q)\leq 0$ for all $q\geq 0$. 
\begin{equation*}
    \begin{aligned}
    U(\calX,q)&=\left(\int_{x^*}^{x^*+(1-f)q}r(x) dx\right)\cdot p_y-q\cdot p_x\\
    &\leq  r(x^*)(1-f)q\cdot p_y-q\cdot p_x\\
    &\leq  r(L_x)(1-f)q\cdot p_y-q\cdot p_x\\
    &= \frac{1}{1-f}\frac{p_x}{p_y}(1-f)q\cdot p_y-q\cdot p_x\\
    &= 0.
    \end{aligned}
\end{equation*}

    Symmetrically we can define $U(\calY,q)$ when miner executes $\Sell(\calY,q)$ and conclude that when $x^*\leq R_x$, $U(\calY,q)\leq 0$ for all $q\geq 0$. This finishes the proof that when $x^*\in[L_x, R_x]$, miners cannot obtain positive profits.

    Then we consider what is an optimal attack when $x^*\not \in[L_x, R_x]$. Suppose that $x^*<L_x$, then by previous argument, the miner should not execute $\Sell(\calY, \cdot)$ (as $x^*<L_x\leq R_x$). So let's focus on the case where the miner executes $\Sell(\calX,q)$. 

    Letting $x'=x^*+(1-f)q$, note that 
\begin{equation*}
    \begin{aligned}
    U(\calX,q)&=\left(\int_{x^*}^{x^*+(1-f)q} r(x) dx\right)\cdot p_y-q\cdot p_x\\
    &=\left(\int_{x^*}^{L_x} r(x)dx+\int_{L_x}^{x'}r(x) dx\right)\cdot p_y-q\cdot p_x, 
    \end{aligned}
\end{equation*}
where 
$\left(\int_{x^*}^{L_x}r(x) dx\right)\cdot p_y-(L_x-x^*)/(1-f)\cdot p_x$ is the profits that miner can get by executing $\Sell(\calX, (L_x-x^*)/(1-f))$ as states in the lemma. Next we show that \[g(x')=\left(\int_{L_x}^{x'}r(x) dx\right)\cdot p_y-(x'-L_x)/(1-f)\cdot p_x\leq 0\]
for all $x'$.

Note that \[g(x')=\left(F_y(L_x) - F_y(x')\right)\cdot p_y -(x'-L_x)/(1-f)\cdot p_x.\] So we have \[g'(x')=-F_y'(x')p_y-p_x/(1-f)=r(x')p_y-p_x/(1-f),\] which is a decreasing function as $r(x')$ is decreasing. Since $g'(L_x)=0$, we have the maximal value of $g$ is at $L_x$, which is 0.

This finishes the proof.
\end{proof}

\section{Strategies under Greedy Sequencing Rule}


In this section, we systemically analyze the strategic behaviors of the miners who \emph{follow} the greedy sequencing rule. 

We specifically focus on the case that the initial state $s_0=(x_0,y_0)$ satisfies $r(x_0)=p_x/p_y$. Note that this is without loss of generality in our context: On the one hand, when $f=0$, $L_x = R_x$ (\textit{i.e.}, the arbitrage-free interval becomes an arbitrage-free point). Supported by \Cref{lemma: arbitrage free interval}, if the current $\calX$ reserves are not $L_x$ ($R_x$), anyone can make money by a single 
arbitrage transaction, namely, by selling $\calX$ or $\calY$ to reach the arbitrage-free point. Thus, it is reasonable to think the last transaction ends up with the state $s_0=(x_0,y_0)$ satisfying $r(x_0)=p_x/p_y$, which is also the initial state of this attack; On the other hand, when $f>0$, we show that the NP-hardness holds even if $r(x_0)=p_x/p_y$, let alone the more general case. It is still interesting to consider the case $r(x_0)\neq p_x/p_y$, and we discuss it in the last section (\Cref{sec: discussion}).

In \Cref{sec: polytime when f=0}, we show a polynomial time algorithm to compute an optimal attack in the regime that the fraction of trading fee $f=0$. Interestingly, it will also \emph{benefit} the users if the miner follows such a strategy compared to truthfully following the greedy sequencing rule.

In contrast, \Cref{sec: NP-hard when f>0} shows that when the fraction of trading fee $f$ is any constant larger than 0 (say $f=0.3\%$ as being used in most Uniswap pools), it is NP-hard to find an optimal strategy. 

\subsection{Upper Bounds of Optimal Profits}
Our main results in this section (\Cref{theorem: polytime algorithm f=0} and \Cref{theorem: NP hard when f>0}) will be crucially based on the following lemma, which provides an upper bound of miner's optimal profit (using arbitrary strategy) under the greedy sequencing rule.

Before presenting the lemma, we first define the arbitragable profit for one transaction, inspired by \Cref{lemma: arbitrage free interval}.
\begin{definition}[Arbitragable Profit]
    Given an initial state $s_0=(x_0,y_0)$ and a user's transaction $\TX$, we define the arbitragable profit $\AP(s_0,\TX)$ as follows:
    \begin{itemize}
        \item If $\TX=\Sell(\calX,q)$, let $x'=\max\left\{x_0+(1-f)q,R_x\right\}$. Then $\AP(s_0,\TX)\coloneqq (x'-R_x)\cdot p_x - \left(F_y(R_x)-F_y(x')\right)/(1-f)\cdot p_y$;
        \item If $\TX=\Sell(\calY,q)$, let $x'=\min\left\{F_x(y_0+(1-f)q),L_x\right\}$. Then $\AP(s_0,\TX)\coloneqq \left(F_y(x')-F_y(L_x)\right)\cdot p_y - (L_x-x')/(1-f)\cdot p_x$.
    \end{itemize}
\end{definition}

\begin{figure}
    \centering
    \includegraphics[scale=0.5]{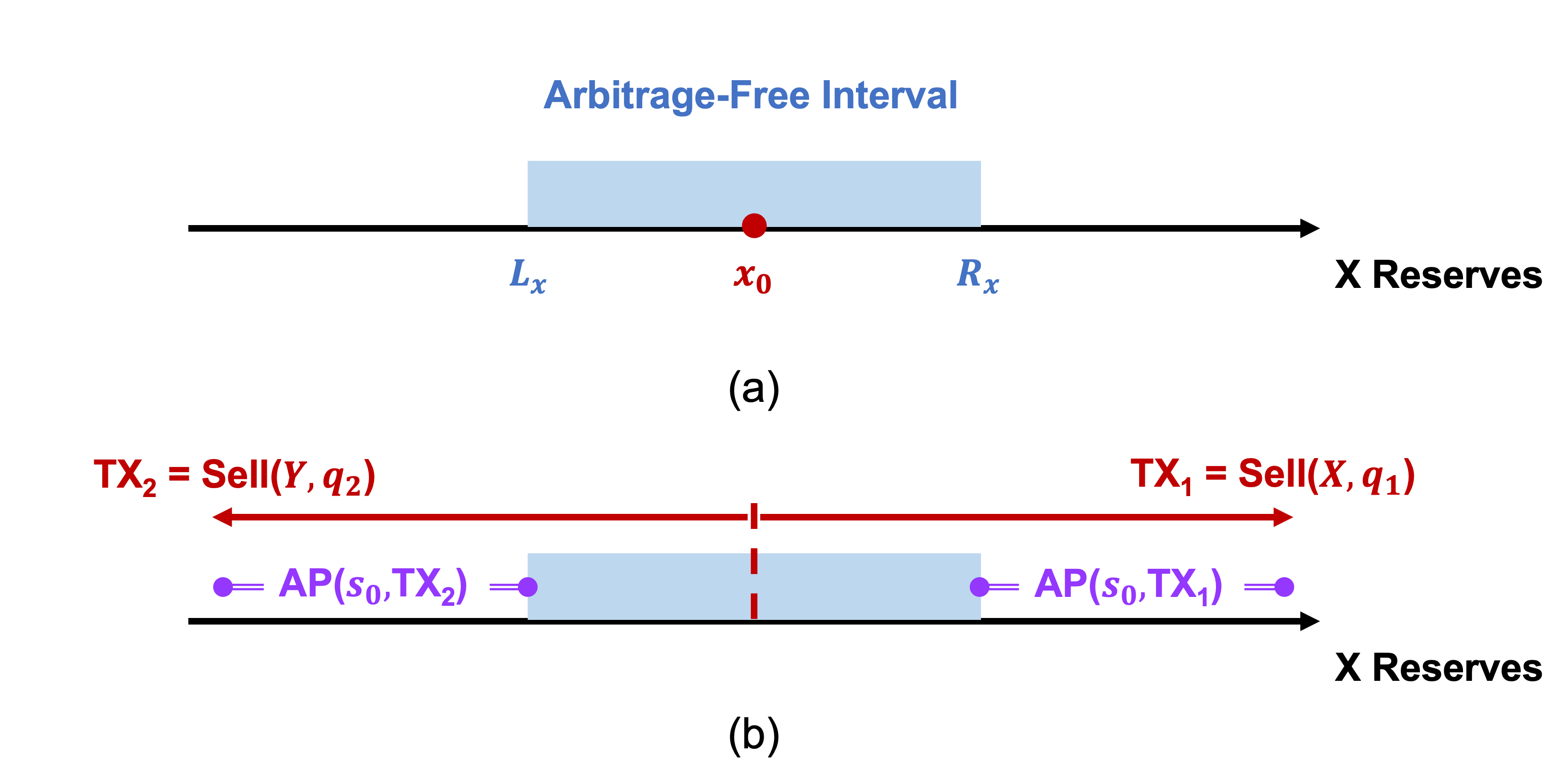}
    \caption{Illustration of Arbitrage-Free Interval and the intuition behind Arbitragable Profit.}
    \label{fig}
\end{figure}

\figureautorefname~\ref{fig} illustrates the intuition behind Arbitragable Profit.

The lemma below shows that the miner's optimal profit is upper-bounded by the sum arbitragable profits of all users' transactions.

\begin{lemma}\label{lemma: upper bound}
    Given an initial state $s_0=(x_0,y_0)$ with $r(x_0)=p_x/p_y$, a set of users' transactions $\{\TX^i\}_{i\in[n]}$, the miner's profit (using arbitrary strategy) under the greedy sequencing rule is upper bounded by $M(s_0,\{\TX^i\}_{i\in[n]})$, where 
    \[M(s_0,\{\TX^i\}_{i\in[n]})\coloneqq \sum_{i=1}^n \AP(s_0,\TX^i).\]
\end{lemma}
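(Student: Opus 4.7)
The plan is a potential-function argument inspired directly by \Cref{lemma: arbitrage free interval}. I will define
\[\Phi(x) := \int_{R_x}^{x} h(t)\,dt \text{ for } x > R_x, \quad \Phi(x) := 0 \text{ for } x \in [L_x, R_x], \quad \Phi(x) := \int_{x}^{L_x} g(t)\,dt \text{ for } x < L_x,\]
where $h(t) := p_x - r(t) p_y/(1-f)$ and $g(t) := r(t) p_y - p_x/(1-f)$. By \Cref{lemma: arbitrage free interval}, $\Phi(x)$ is exactly the profit of an optimal single arbitrage transaction from state $(x, F_y(x))$, and direct inspection shows that $\AP(s_0, \TX^i)$ equals $\Phi$ evaluated at the post-$\TX^i$ state (clipped to $R_x$ or $L_x$) that arises when $\TX^i$ is executed alone from $s_0$. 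Note that $\Phi$ is non-negative, non-increasing on $(-\infty, R_x]$, and non-decreasing on $[L_x, \infty)$. Since $r(L_x) > r(x_0) = p_x/p_y > r(R_x)$ and $r$ is decreasing, $L_x \le x_0 \le R_x$, so $\Phi(x_0) = 0$.

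Next I will establish two per-transaction bounds. For any miner transaction from $x_{i-1}$ to $x_i$, the miner's profit is at most $\Phi(x_{i-1}) - \Phi(x_i)$; this reduces to the pointwise inequalities $\Phi'(t) \ge h(t)$ (for $\Sell(\calY,\cdot)$) and $-\Phi'(t) \ge g(t)$ (for $\Sell(\calX,\cdot)$), both of which follow from a short three-region case analysis together with $f/(1-f) \ge 0$. For any user transaction executed under the first clause of the $\GSR$ (which I will call the \emph{mixed phase}), I will show $\Phi(x_i) - \Phi(x_{i-1}) \le \AP(s_0, \TX^i)$. The $\Sell(\calX, q)$ case with $x_{i-1} \le x_0$ is immediate from $x_i \le x_0 + (1-f) q$ combined with the non-negativity and monotonicity of $\Phi$. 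The $\Sell(\calY, q)$ case with $x_{i-1} \ge x_0$ is subtler: using $\int_{x_i}^{x_{i-1}} r(t)\,dt = (1-f) q = \int_{\tilde{x}}^{x_0} r(t)\,dt$, where $\tilde{x} = F_x(y_0 + (1-f) q)$, and subtracting yields $\int_{\tilde{x}}^{x_i} r(t)\,dt = \int_{x_0}^{x_{i-1}} r(t)\,dt \ge 0$, hence $\tilde{x} \le x_i$, and so $\Phi(x_i) \le \Phi(\tilde{x}) = \AP(s_0, \TX^i)$.

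The tail suffix, i.e.\ the positions at which the second clause of the $\GSR$ is invoked, will be handled separately. In a $\Sell(\calX,\cdot)$ tail, $x$ increases monotonically from $x^*$ (the state at the start of the tail), and since $g(t) > 0$ only for $t < L_x$, the miner's aggregate profit over its tail transactions is bounded above by $\int_{x^*}^{\min(x_{\mathrm{final}}, L_x)} g(t)\,dt \le \Phi(x^*)$; the $\Sell(\calY,\cdot)$ tail is symmetric. Telescoping the mixed-phase inequalities gives (mixed miner profit) $\le \Phi(x_0) - \Phi(x^*) + \sum_{\text{mixed user } i} \AP(s_0, \TX^i) = -\Phi(x^*) + \sum_{\text{mixed user } i} \AP(s_0, \TX^i)$, and adding the tail bound $\Phi(x^*)$ yields total miner profit $\le \sum_{i \in [n]} \AP(s_0, \TX^i) = M(s_0, \{\TX^i\}_{i\in[n]})$, using $\AP \ge 0$ to extend the sum from executed user transactions to all user transactions.

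The main obstacle is the $\Sell(\calY, q)$ case of the user-transaction bound, where the hypothetical post-transaction state $\tilde{x}$ need not coincide with the actual state $x_i$, so controlling the sign of $\tilde{x} - x_i$ requires both the integral identity above and the monotonicity of $r$. The tail-phase bookkeeping is also delicate, since user transactions inside the tail can inflate $\Phi$ arbitrarily, but the structural fact that the tail is uni-typed precisely caps the miner's extraction there at $\Phi(x^*)$, which is exactly the slack that the mixed-phase telescoping leaves available.
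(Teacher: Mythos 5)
Your proposal is correct and takes essentially the same route as the paper: your $\Phi$ is exactly the paper's potential $\phi_i$ written in integral form, and your per-step bound $U_i+\Phi(x_i)-U_{i-1}-\Phi(x_{i-1})\leq \AP(s_0,\TX^{\tau(i)})$ reproduces its Case~1/Case~2 induction, with the only structural difference being that you bound the uni-typed tail directly by $\Phi(x^*)$ (in fact its miner profit is $\leq 0$, since a $\Sell(\calX,\cdot)$ tail starts at $x^*>x_0\geq L_x$) while the paper truncates the tail via \Cref{lemma: arbitrage free interval} as a WLOG. Two spots need only cosmetic patching from facts you already state: when $x_i>R_x$ in the user-transaction cases the inequality follows from $\Phi$ being non-decreasing on $[L_x,\infty)$ (so $\Phi(x_i)\leq\Phi(x_{i-1})$) rather than from $\Phi(x_i)\leq\Phi(\tilde{x})$ or $\Phi(x_i)\leq\Phi(x_0+(1-f)q)$, and the tail integral $\int_{x^*}^{\min(x_{\mathrm{final}},L_x)}g(t)\,dt$ should be read as $0$ when $x^*\geq L_x$.
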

\begin{proof}
    Fix arbitrary sequence of (users' and miner's) transactions $(\TX^{\tau(1)},\cdots,\TX^{\tau(k)})$, where $\TX^{\tau(i)}$ is a user's transaction if $\tau(i)\in [n]$ and it is the miner's transaction otherwise. Let $s_i=(x_i,y_i)$ be the state after executing $\TX^{\tau(i)}$. 
    Without loss of generality, we assume that $\TX^{\tau(i)}=\Sell(\calX,\cdot)$ if and only if $x_{i-1}\leq x_0$ and $\TX^{\tau(i)}=\Sell(\calY,\cdot)$ if and only if $y_{i-1}\leq y_0$ for all $i\in\{2,\cdots,k\}$. To see it, suppose that for $k'<k$ we have $\TX^{\tau(i)}=\Sell(\calX,\cdot)$ and $x_{i-1}>x_0$ for all $i\in\{k'+1,\cdots,k\}$. Then by~\Cref{lemma: arbitrage free interval}, we know that miner's profit obtained from $\TX^{\tau(k'+1)},\cdots\TX^{\tau(k)}$ is at most 0 (and possibly negative). It means the miner can always choose not to execute these transactions and the profit is as good as before.

    We will inductively show that after executing the first $i$ transactions, the miner's profit $U_i\leq V_i\coloneqq \sum_{j\in[i],\tau(j)\in[n]}\AP(s_0,\TX^{\tau(j)})$. This will imply that after executing all $k$ transactions, miner's profit is upper bounded by $\sum_{i\in[n]} \AP(s_0,\TX^i)$.

    We define $\phi_i$ as follows:
    \begin{equation*}
    \phi_i=\left\{ \begin{array}{ll}
    (x_i-R_x)\cdot p_x+ \frac{F_y(x_i)-F_y(R_x)}{1-f}\cdot p_y, & x_i>R_x;\\
    \frac{x_i-L_x}{1-f}\cdot p_x+\left(F_y(x_i)-F_y(L_x)\right)\cdot p_y, & x_i<L_x;\\
    0, & x_i\in[L_x,R_x].
    \end{array}
    \right.
    \end{equation*}

    We will show that $(U_i+\phi_i)-(U_{i-1}+\phi_{i-1})\leq V_i-V_{i-1}=\AP(s_0,\TX^{\tau(i)})$ for all $i\in[k]$, which will imply our desired statement $U_i\leq V_i$ as $\phi_i\geq 0$ for all $i\in[k]$. (Here we define $\AP(s_0,\TX^{\tau(i)})=0$ if it is a miner's transaction.)

    The basis of the induction is trivial as $U_0+\phi_0=0$. For the induction step, let's consider arbitrary $i\in[k]$. 

    \textbf{Case 1:} $\TX^{\tau(i)}$ is a user's transaction. Then we have $U_i=U_{i-1}$. So it suffices for us to show $\phi_i-\phi_{i-1}\leq \AP(s_0,\TX^{\tau(i)})$. Suppose that $\TX^{\tau(i)}=\Sell(\calX,q)$. Then it must be the case $x_{i-1}<=x_0$ due to the greedy sequencing rule. 
    (The other case $\TX^{\tau(i)}=\Sell(\calY,q)$ will be symmetric.) If $x_i\leq x_0$, then we have that $\phi$ in fact didn't increase, which means $\phi_i-\phi_{i-1}\leq 0\leq \AP(s_0,\TX^{\tau(i)})$. If $x_i>x_0$, then since $x_{i-1}\leq x_0$, we have $x_i\leq \max\left\{x_0+(1-f)q,R_x\right\}$. So that $\phi_i\leq \AP(s_0,\TX^{\tau(i)})$, concluding the first case.
    
    \textbf{Case 2:} $\TX^{\tau(i)}$ is a miner's transaction. Then we have $V_i=V_{i-1}$. So it suffices for us to show $U_i-U_{i-1}+\phi_i-\phi_{i-1}\leq 0$. Suppose that $\TX^{\tau(i)}=\Sell(\calX,q)$, then it must be the case $x_{i-1}<=x_0$ due to the greedy sequencing rule. (Again, the other case $\TX^{\tau(i)}=\Sell(\calY,q)$ will be symmetric.) 
    
    If $x_{i-1}\leq x_i\leq L_x$, then we have in fact $U_i-U_{i-1}+\phi_i-\phi_{i-1}=0$ since $U_i-U_{i-1}=\phi_{i-1}-\phi_i=-(x_i-x_{i-1})/(1-f)\cdot p_x+\left(F_y(x_{i-1})-F_y(x_i)\right)\cdot p_y.$

    Now, let's consider the case $L_x\leq x_i$. To simplify the analysis, we consider an intermediate state $s'$ with $U'$ and $\phi'$. If $x_{i-1}\geq L_x$, then we just set $s'=s_{i-1}$ with $U'=U_{i-1}$ and $\phi'=\phi_{i-1}$. If $x_{i-1}<L_x$, we split $\TX^{\tau(i)}$ into two transactions: $\TX'=\Sell(\calX,(L_x-x_{i-1})/(1-f))$ and $\TX''=\Sell(\calX,(x_i-L_x)/(1-f))$, and we define $s'$, $U'$ and $\phi'$ as that after executing $\TX'$. 
    
    Note that we have $U'-U_{i-1}=\phi_{i-1}-\phi'$. So we only need to show $U_{i}-U'\leq \phi'-\phi_i$. Note that in fact $\phi'=0$.

    If $x_i\leq R_x$, then $\phi_i=\phi'=0$. In addition, by \Cref{lemma: arbitrage free interval}, we know that $U_i-U'\leq 0$. So we conclude $U_{i}-U'\leq \phi'-\phi_i$ as desired.

    The last possibility is that $x_i>R_x$, where we have \[\phi_i=(x_i-R_x)\cdot p_x+ \frac{F_y(x_i)-F_y(R_x)}{1-f}\cdot p_y.\]
    Moreover, by \Cref{lemma: arbitrage free interval}, we know that $U_i-U'\leq (R_x-x_i)/(1-f)\cdot p_x+\left(F_y(R_x)-F_y(x_i)\right)\cdot p_y < -\phi_i$.
    
    This finishes the proof.
\end{proof}

\subsection{Polynomial Time Algorithm When $f=0$}
\label{sec: polytime when f=0}
In this subsection, we show a polynomial time algorithm to find an optimal strategy for the miner when $f=0$. Interestingly, when adopting our algorithm, users will also benefit in the following sense: all users' transactions will be executed (\textit{a.k.a} they will be included in the block), and every user gets at least as good as if their transaction was the only one in the block. The latter is generally not true if the miner truthfully follows the greedy sequencing rule.

\begin{algorithm}[!t]
	\caption{Algorithm for optimal strategy when $f=0$}
	\label{alg: f=0}
	\KwIn{An initial state $s_0=(x_0,y_0)$, and a set of users' transactions $\{\TX^i\}_{i\in[n]}$.
	}
	\KwOut{An optimal strategy for miner to obtain $M(s_0,\{\TX^i\}_{i\in[n]})$ profits, which is the best possible.
	}
	
	\BlankLine
	
	Sort these $n$ transactions in any order $\tau:[n]\rightarrow[n]$.
 
    \For{each $i$ from $1$ to $n$}{
    Execute user's transaction $\TX^{\tau(i)}$.

    \If{$\TX^{\tau(i)}=\Sell(\calX,q)$}{\label{line: if}
    Execute a transaction $\Sell(\calY,y_0-F_y(x_0+q))$.\label{line: tx y}
    }
    \If{$\TX^{\tau(i)}=\Sell(\calY,q)$}{Execute a transaction $\Sell(\calX,x_0-F_x(y_0+q))$.\label{line: tx x}
    }\label{line: end if}
    }
    
\end{algorithm}

\begin{theorem}\label{theorem: polytime algorithm f=0}
    When the fraction of trading fee $f=0$, Algorithm~\ref{alg: f=0} finds an optimal strategy under the greedy sequencing rule in polynomial time, and the optimal profit is equal to the upper bound $M(s_0,\{\TX^i\}_{i\in[n]})$.
\end{theorem}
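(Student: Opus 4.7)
The plan is to verify three things about Algorithm~\ref{alg: f=0}: (i) the output permutation lies in $\GSR$, (ii) each user's transaction together with the immediately following miner transaction contributes exactly $\AP(s_0,\TX^{\tau(i)})$ to the miner's profit, and (iii) this profit equals the upper bound from \Cref{lemma: upper bound}, hence the strategy is optimal. Running time is $O(n)$ evaluations of $F_x,F_y$, so polynomial is immediate.

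The first key observation is that when $f=0$ and $r(x_0)=p_x/p_y$, the arbitrage-free interval $[L_x,R_x]$ collapses to the single point $x_0$; in particular $R_x=L_x=x_0$ and $F_y(R_x)=y_0$. I would open the proof by recording this identity because it makes the book-keeping clean. Next I would show by induction on $i$ that just before executing the user transaction $\TX^{\tau(i)}$ the state is exactly $s_0$. This is immediate from the design of lines \ref{line: tx y}--\ref{line: tx x}: after a user's $\Sell(\calX,q)$ the state becomes $(x_0+q,F_y(x_0+q))$, and the miner's $\Sell(\calY,y_0-F_y(x_0+q))$ restores $y$ to $y_0$ and, by Axiom~\ref{axiom: monotone} with $f=0$, restores $x$ to $x_0$; the symmetric case is analogous.

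From the invariant ``state is $s_0$ before each user step,'' GSR-compliance is straightforward. Every user transaction fires when $x_{i-1}=x_0$ and $y_{i-1}=y_0$, satisfying both clauses trivially. A miner's $\Sell(\calY,\cdot)$ fires right after a user's $\Sell(\calX,q)$, at which moment $y=F_y(x_0+q)<y_0$, so the $y_{i-1}\le y_0$ clause holds; the symmetric check works for the miner's $\Sell(\calX,\cdot)$ after a user's $\Sell(\calY,\cdot)$. Thus the produced order is a valid $\GSR$ permutation.

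For the profit accounting, consider a pair (user $\Sell(\calX,q)$, miner $\Sell(\calY,y_0-F_y(x_0+q))$): the user's step contributes nothing to the miner, while the miner's step yields $q\cdot p_x-(y_0-F_y(x_0+q))\cdot p_y$. Using $R_x=x_0$ and $x'=\max\{x_0+q,R_x\}=x_0+q$, this is exactly $\AP(s_0,\TX^{\tau(i)})$ from the definition; the $\Sell(\calY,\cdot)$ case is symmetric via $L_x=x_0$. Summing over $i$ gives total profit $M(s_0,\{\TX^i\}_{i\in[n]})$, which by \Cref{lemma: upper bound} is the best any GSR-respecting strategy can achieve. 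The main obstacle I anticipate is not any single step but the care needed in matching the miner's per-pair revenue to the $\AP$ definition in both $\calX$- and $\calY$-selling cases; once the collapse $L_x=R_x=x_0$ is noted, the rest is bookkeeping.
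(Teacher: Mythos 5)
Your proposal is correct and follows essentially the same route as the paper's proof: establish the invariant that the state returns to $s_0$ after each user--miner pair (which gives GSR-compliance), observe that the miner's back-running transaction in each pair earns exactly $\AP(s_0,\TX^{\tau(i)})$, and invoke the upper bound of \Cref{lemma: upper bound} for optimality. Your explicit note that $L_x=R_x=x_0$ when $f=0$ and $r(x_0)=p_x/p_y$, and the per-pair profit bookkeeping, are just a more detailed write-up of the same argument.
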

\begin{remark}
    Before going into details of the proof, we note that our algorithm can obtain the optimal profit $M(s_0,\{\TX^i\}_{i\in[n]})$ under arbitrary order of users' transactions $\{\TX^i\}_{i\in[n]}$. So it still works even if there are some constraints on the execution order of certain transactions (\textit{e.g.}, a user may create two transactions $\{\TX^1, \TX^2\}$ and specify that $\TX^1$ must be executed before $\TX^2$).
\end{remark}

\begin{proof}[Proof of \Cref{theorem: polytime algorithm f=0}]
    We first show that the sequence given by Algorithm~\ref{alg: f=0} satisfies the greedy sequencing rule. Note that after executing each user's transaction $\TX^{\tau(i)}$, we always execute a miner's transaction with the opposite direction, shown between line~\ref{line: if} and \ref{line: end if}. Besides, at the end of $i$-th iteration, we have the state $s_{2i}=s_0$ (we use $2i$ because we execute two transactions in each iteration). So our sequence satisfies the greedy sequencing rule. Furthermore, during the $i$-th iteration, we obtain exactly $\AP(s_0,\TX^{\tau(i)})$ profits by executing the transaction on line~\ref{line: tx y} or \ref{line: tx x}. Then the optimality follows from the same upper bound provided by \Cref{lemma: upper bound}.
\end{proof}

Now we turn to the \emph{positive} effects on users when a miner launches an optimal strategy given by Algorithm~\ref{alg: f=0}. We summarize them as the following two corollaries and omit the proofs as they are relatively straightforward from the proof of \Cref{theorem: polytime algorithm f=0}.
\begin{corollary}\label{corollary: liveness}
    When a miner launches an optimal strategy given by Algorithm~\ref{alg: f=0}, all users' transactions $\{\TX^i\}_{i\in[n]}$ will be executed.
\end{corollary}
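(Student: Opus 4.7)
The plan is to read off the claim directly from the structure of Algorithm~\ref{alg: f=0} and the verification already carried out in the proof of \Cref{theorem: polytime algorithm f=0}. I would first observe that the \texttt{for} loop in Algorithm~\ref{alg: f=0} iterates over $i = 1, \dots, n$ and that the very first statement inside the loop (the line "Execute user's transaction $\TX^{\tau(i)}$") unconditionally executes $\TX^{\tau(i)}$. Since $\tau$ is a permutation of $[n]$, every user transaction appears in the emitted sequence. Hence as a purely syntactic fact about the algorithm, the miner's output sequence $S$ contains $\{\TX^i\}_{i \in [n]}$ as a subset.

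The only thing left to justify is that this sequence is actually a legal play for the miner, i.e.\ that it complies with the greedy sequencing rule, so the user transactions really are \emph{executed on-chain} rather than just formally listed. But this is exactly what the proof of \Cref{theorem: polytime algorithm f=0} already established: after the $i$-th iteration the state is $s_{2i} = s_0 = (x_0, y_0)$, so at the start of iteration $i+1$ the precondition $x_{2i} \le x_0$ (resp.\ $y_{2i} \le y_0$) required by $\GSR$ holds for whichever of $\Sell(\calX,\cdot)$ or $\Sell(\calY,\cdot)$ the next user transaction turns out to be, and the miner's counter-trade on line~\ref{line: tx y} or line~\ref{line: tx x} restores the invariant. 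Thus the full interleaved sequence lies in $\GSR(s_0, S)$, and every $\TX^i$ is indeed executed.

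There is no real obstacle here: once one has the invariant $s_{2i}=s_0$ from the proof of \Cref{theorem: polytime algorithm f=0}, the corollary is a one-line consequence, which is why the authors chose to omit the formal write-up. If anything, the only care needed is to note that the invariant holds specifically because $f = 0$ (so that the miner's restoring transaction in line~\ref{line: tx y} or \ref{line: tx x} exactly reverses the user's trade), but this hypothesis is already part of the theorem under which the corollary is stated.
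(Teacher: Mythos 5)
Your argument is correct and is exactly the reasoning the paper intends: the authors omit the proof precisely because, as you observe, every $\TX^{\tau(i)}$ is executed unconditionally inside the loop, and the legality of the whole sequence under $\GSR$ (via the invariant $s_{2i}=s_0$, which uses $f=0$) is already established in the proof of \Cref{theorem: polytime algorithm f=0}. No gap; same approach as the paper.
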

\begin{corollary}\label{corollary: good profit}
    When a miner launches an optimal strategy given by Algorithm~\ref{alg: f=0}, each user's profit is as good as if their transaction was the only transaction in the block.
\end{corollary}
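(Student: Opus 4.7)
The plan is to reduce the corollary to a single invariant of Algorithm~\ref{alg: f=0}: immediately before the algorithm executes any user's transaction $\TX^{\tau(i)}$, the pool reserves equal the initial state $s_0=(x_0,y_0)$. Once this invariant is in hand, the corollary follows immediately, because in a CFMM the tokens a user receives from a swap are a function only of the pre-swap reserves and the transaction parameters; hence the quantity delivered to each user (and therefore her profit, since the amount she pays is fixed by her own transaction) is identical to what she would obtain if her transaction were the only one in the block executed from $s_0$.

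To establish the invariant I would proceed by induction on $i$. The base case $i=1$ holds because execution starts at $s_0$. For the inductive step, assume the state at the start of iteration $i$ equals $s_0$ and consider $\TX^{\tau(i)}=\Sell(\calX,q)$; the $\Sell(\calY,\cdot)$ case is symmetric and uses line~\ref{line: tx x}. With $f=0$, executing the user's transaction moves the state to $(x_0+q,\,F_y(x_0+q))$. The miner's counter-transaction on line~\ref{line: tx y}, namely $\Sell(\calY,\,y_0-F_y(x_0+q))$, raises the $\calY$-reserve back to $y_0$, and Axiom~\ref{axiom: monotone} then forces the $\calX$-reserve back to $x_0$. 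Thus the state at the end of iteration $i$ is again $s_0$, closing the induction.

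I do not anticipate a substantive obstacle; this is essentially the same state-reset argument already used in the proof of \Cref{theorem: polytime algorithm f=0}, which is why the authors describe the corollary as straightforward. The two minor points worth verifying in writing are (i) that the miner's counter-transaction has non-negative quantity, i.e.\ $y_0-F_y(x_0+q)\ge 0$, which is immediate from Axiom~\ref{axiom: monotone} since $x_0+q\ge x_0$, and (ii) that the user's swap really sees pre-reserves $s_0$ rather than some intermediate state, which holds by construction because the algorithm places the counter-transaction \emph{after} the user's transaction within the same iteration. The entire argument leans crucially on $f=0$ so that the counter-transaction exactly undoes the reserve movement caused by the user; with $f>0$ this perfect cancellation is impossible, consistent with the NP-hardness established for positive fees in the next subsection.
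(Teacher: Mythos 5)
Your proposal is correct and matches the paper's intended argument: the proof of \Cref{theorem: polytime algorithm f=0} already establishes the state-reset invariant $s_{2i}=s_0$ at the end of each iteration, so every user's transaction executes from $s_0$ and receives exactly what it would if it were alone in the block. Your additional checks (non-negativity of the miner's counter-transaction and the role of $f=0$) are fine but do not change the route.
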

As shown in \Cref{theorem: polytime algorithm f=0}, \Cref{corollary: liveness}, \Cref{corollary: good profit}, both miner and users are satisfied when miner adopts our Algorithm~\ref{alg: f=0}.

\subsection{NP-hardness When $f>0$}
\label{sec: NP-hard when f>0}
In this subsection, we show the computational hardness of finding an optimal strategy when the fraction of trading fees is any constant larger than 0 (say $f=0.3\%$). 

We will mainly focus on the proof of the NP-completeness of the following decision problem, then \Cref{theorem: NP hard when f>0} will follow directly.
\begin{theorem}\label{theorem: NP hard of decision}
    Let $f\in(0,1)$ be any universal constant. It is NP-complete to decide if there is a strategy that can obtain profits $M(s_0,\{\TX^i\}_{i\in[n]})$ for any initial state $s_0=(x_0,y_0)$ and users' transactions $\{\TX^i\}_{i\in[n]}$.
\end{theorem}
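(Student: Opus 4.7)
The plan is to establish both membership and hardness. For NP membership, a strategy is captured by a list of miner-inserted transactions, a selected subset $S$, and an ordering $\tau$; the greedy sequencing rule is efficiently verifiable and the resulting profit is computable in polynomial time by simulation. The only subtlety is giving a polynomial-bit encoding of an optimal strategy, which follows because the equality conditions inspected in the proof of \Cref{lemma: upper bound} force the optimal miner transactions to drive the state exactly to $R_x$ or $L_x$, amounts that are rational expressions in the input parameters.

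For NP-hardness I would reduce from \textsc{Subset Sum}. Given $\{b_1,\ldots,b_n\}$ and target $B$, I construct a constant-product CFMM $xy=C$ and scale $x_0, y_0, p_x, p_y$ so that $r(x_0) = p_x/p_y$ and $(y_0 - F_y(R_x))/(1-f) = B$ exactly. The user transactions consist of two ``big'' transactions, both of the form $\Sell(\calX, Q)$ with $Q$ large enough that $x_0 + (1-f)Q > R_x$, together with $n$ ``small'' transactions $\Sell(\calY, b_i)$, each $b_i$ small enough that $\AP(s_0, \Sell(\calY, b_i)) = 0$. If a subset $S \subseteq [n]$ with $\sum_{i\in S} b_i = B$ exists, the miner achieves $M$ by the following schedule: execute the first big transaction from $s_0$; insert a miner $\Sell(\calY, \cdot)$ that lands the state exactly at $R_x$, which is net zero for $U + \phi$ but extracts the arbitragable profit of the first big into $U$; execute the small transactions of $S$ in any order, whose cumulative $y$-change is exactly $y_0 - F_y(R_x)$, returning the state from $R_x$ to $s_0$; execute the second big from $s_0$; and finish with a miner $\Sell(\calY, \cdot)$ back to $R_x$. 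Transactions outside $S$ are excluded. A direct check shows this schedule obeys the greedy rule and yields total profit $M$.

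The main obstacle is the converse: no strategy achieves $M$ unless such a subset exists. Examining the equality conditions in the proof of \Cref{lemma: upper bound} yields two necessary constraints: (i) each big user transaction must begin at state exactly $s_0$, since any other starting state produces $\phi_i - \phi_{i-1} < \AP$ strictly by strict convexity of $\phi$ outside $[L_x, R_x]$; and (ii) every miner-inserted transaction must contribute exactly $0$ to $U + \phi$, which by the Case 2 analysis of that proof permits only moves whose endpoint lies at $R_x$ or $L_x$, since any interior endpoint strictly loses. Together these force the trajectory between the two big transactions to pass through $s_0 \to x_0 + (1-f)Q \to R_x \to s_0$, with the last segment achieved solely by user $\Sell(\calY, \cdot)$ transactions summing to exactly $B$. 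The hardest part is ruling out exotic schedules, for example: (a) deferring the miner's reset and letting small transactions partially cancel $\phi$; (b) performing only a partial reset to some $x^\star \in (R_x, x_0 + (1-f)Q)$ paired with a different small-subset sum; or (c) inserting $\Sell(\calX, \cdot)$ transactions inside $[L_x, R_x]$ to nudge the state. A case analysis building on \Cref{lemma: arbitrage free interval} shows that each such variant either violates the greedy sequencing rule or incurs a strict positive loss in $U + \phi$, so $M$ is unattainable without the subset. Polynomial-bit bounds on the constructed instance complete the reduction.
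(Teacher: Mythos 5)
Your proposal is correct and follows essentially the same route as the paper: the paper reduces from Partition with the mirror-image gadget (two large $\Sell(\calY,\cdot)$ transactions of positive arbitragable profit plus small $\Sell(\calX,a_i)$ transactions of zero arbitragable profit, calibrated so $x_0-L_x=(1-f)t$), and proves the converse exactly as you sketch, via the equality conditions in the potential-function argument of \Cref{lemma: upper bound}. Your swap of the roles of $\calX$ and $\calY$ and of Partition for Subset Sum is immaterial, and your level of care on the "exotic schedules" matches the paper's own treatment.
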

\begin{proof}
    The NP-membership is easy. Given any strategy, we can efficiently simulate the execution of the sequence of transactions and check if the final profit is $M(s_0,\{\TX^i\}_{i\in[n]})$ or not.

    For the NP-hardness, we reduce the Partition problem to our problem. Recall that the instance of the partition problem contains $n$ positive integers and ask if it can be partitioned into two subsets $S_1$ and $S_2$ such that the sum of numbers in $S_1$ equals that in $S_2$.

    Suppose we are given arbitrary $n$ positive integers $\{a_1,\cdots,a_n\}$. Let $t$ be half of the sum of these integers, \textit{i.e.}, $\frac{1}{2}\sum_{i=1}^{n} a_i$. Without loss of generality, we assume that $a_i\leq t$ for all $i\in[n]$ otherwise the answer to the decision problem will directly be ``no''.
    
    We first construct a CFMM $A$ and initial state $s_0$. Concretely, we can consider the constant curve of $A$ as $F(x,y):xy=k$, and our goal is to choose parameters such that $x_0-L_x=(1-f)t$. Precisely, we know that $L_x=\sqrt{1-f}x_0$, since $r(L_x)=\frac{1}{1-f}r(x_0)$. This means $x_0-L_x=(1-\sqrt{1-f})x_0$. So choosing $x_0=\frac{1-f}{1-\sqrt{1-f}}t$ would suffice.

    Next, we construct users' transactions. For each integer $a_i$, we construct $\TX^{i}=\Sell(\calX,a_i)$. Clearly, we have $\AP(s_0,\TX^i)=0$ as $(1-f)a_i \leq (1-f)t = x_0-L_x \leq R_x-x_0$. Then we construct two $\Sell(\calY,\cdot)$ transactions. Precisely, we construct $\TX^{n+1}=\TX^{n+2}=\Sell(\calY, q^*)$ where $q^*$ is large enough such that $F_x(y_0 + (1-f)q^*)<L_x$. Then we know $\AP(s_0,\TX^{n+1})=\AP(s_0,\TX^{n+2})>0$. This finishes the construction. And we know $M(s_0,\{\TX^i\}_{i\in[n+2]})=2\AP(s_0,\TX^{n+1})$.

    Finally, we argue that there exists a strategy obtaining profits $M(s_0,\{\TX^i\}_{i\in[n+2]})$ if and only if there exists a subset $S\subseteq [n]$ such that the sum of the numbers in $S$ equal $t$. And this will conclude the theorem.

    One direction is easy: if there exists $S\subseteq [n]$ such that the sum of the numbers in $S$ equal $t$, then we execute transactions as follows: \begin{enumerate}
        \item Execute user's transaction $\TX^{n+1}$; 
        Execute miner's transaction $\Sell(\calX, \frac{L_x-F_x\left(y_0 + (1-f)q^*\right)}{1-f})$;\vspace{-0.2cm}
        \item Execute $\TX^i$ for all $i\in S$; \vspace{-0.2cm}
        \item Repeat item (1) except replacing $\TX^{n+1}$ by $\TX^{n+2}$.
    \end{enumerate} 
    It is easy to verify that this sequence satisfies the greedy sequencing rule, and the miner can obtain $M(s_0,\{\TX^i\}_{i\in[n+2]})$.

    For the other direction, we show that the sequence of transactions constructed above is essentially the only way to obtain $M(s_0,\{\TX^i\}_{i\in[n+2]})$. So a miner can obtain $M(s_0,\{\TX^i\}_{i\in[n+2]})$ only if the answer to the given Partition problem is ``yes''.

    We adopt a proof scheme similar to that of \Cref{lemma: upper bound}. Fix a sequence of (users' and miner's) transactions $(\TX^{\tau(1)},\cdots,\TX^{\tau(k)})$ such that miner's profit $U=2\AP(s_0,\TX^{n+1})$. Recall that in the proof of \Cref{lemma: upper bound}, we defined $\phi_i$ and showed $U_i+\phi_i-(U_{i-1}+\phi_{i-1})\leq \AP(s_0,\TX^{i})$ for all $i\in[k]$. Since $U_k=2\AP(s_0,\TX^{n+1})$ at the end, it must be the case $U_i+\phi_i=V_i$ for all $i\in[k]$ and $\phi_k=0$. As a result, the sequence of transactions must satisfy that
    \begin{itemize}
        \item The miner does not lose profit for any transaction; otherwise the loss of the profit is strictly larger than the gain of the $\phi$ function, and this will result in $U_i+\phi_i<V_i$ for some $i$.
        \item There are $i_1\neq i_2\in[k]$ such that $\phi_{i_1}=\phi_{i_2}=\AP(s_0,\TX^{n+1})$. This means when execute $\TX^{n+1}$ and $\TX^{n+2}$, the corresponding state must be $(x_0,y_0)$.
    \end{itemize}

    To achieve both items simultaneously, it must be $(x_{i_1-1},y_{i_1-1})=(x_0,y_0)$ and $\TX^{n+1}$ is executed as $\TX^{\tau(i_1)}$. To get the first $\AP(s_0,\TX^{n+1})$ profit, miner executes $\Sell(\calX, \frac{L_x-F_x\left(y_0 + (1-f)q^*\right)}{1-f})$ in the $(i_1+1)$-th iteration. To make sure that $(x_{i_2-1},y_{i_2-1})=(x_0,y_0)$ (and $\TX^{n+2}$ is executed as $\TX^{\tau(i_2)}$) while the miner does not loss any profit in this process, we must use users' transactions to change the state from $x_{i_1}=L_x$ to $x_{i_2-1}=x_0$, which means we need a subset $S$ of users' transactions such that the sum of numbers in $S$ is exactly $t$.

    This finishes the proof.
\end{proof}

\Cref{theorem: NP hard when f>0} follows directly by simulating any algorithm that computes an optimal strategy and calculates the profits to solve the decision problem.

\begin{theorem}\label{theorem: NP hard when f>0}
    Let $f\in(0,1)$ be any universal constant. It is NP-hard to compute the strategy that can obtain the optimal profits.
\end{theorem}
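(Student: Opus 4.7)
The plan is to give a polynomial-time Turing reduction from the decision problem shown NP-complete in \Cref{theorem: NP hard of decision} to the optimization problem of computing an optimal miner strategy. Since NP-hardness of an optimization problem follows once we exhibit such a reduction from any NP-complete language, this will settle \Cref{theorem: NP hard when f>0}.

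Concretely, given an instance $(s_0, \{\TX^i\}_{i \in [n]})$ of the decision problem, I would proceed in three steps. First, compute the benchmark value $M(s_0, \{\TX^i\}_{i \in [n]}) = \sum_{i=1}^n \AP(s_0, \TX^i)$ directly from the definition; each arbitragable profit is obtained in polynomial time by evaluating $F_y$, $F_x$, $L_x$, and $R_x$ on the current CFMM (these are explicit functions of the inputs, e.g., $F(x,y) = xy = k$ in the partition-style construction used in the previous proof). Second, invoke a hypothetical polynomial-time algorithm for the optimization problem on the same input to obtain an optimal strategy, i.e., a list of miner-added transactions, a chosen subset $S$, and an ordering $\tau \in \GSR(s_0, \{\TX^i\}_{i \in S})$. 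Third, simulate the execution of this strategy under the greedy sequencing rule in polynomial time, accumulating the miner's net token-values along the way to obtain the attained profit $U^*$.

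By \Cref{lemma: upper bound}, we always have $U^* \leq M(s_0, \{\TX^i\}_{i \in [n]})$, so comparing $U^*$ to $M(s_0, \{\TX^i\}_{i \in [n]})$ in the final step tells us exactly whether the decision problem's answer is ``yes''. This yields a polynomial-time algorithm for the decision problem given any polynomial-time algorithm for the optimization problem, so the latter must be NP-hard.

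There is essentially no major obstacle in this step: the substantive work (the partition-based hard instance, the argument that only strategies recovering a specific subset sum can reach the upper bound) has already been carried out in the proof of \Cref{theorem: NP hard of decision}. What remains here is the routine simulation-and-comparison reduction, plus a brief justification that both $M(s_0, \{\TX^i\}_{i \in [n]})$ and a given strategy's realized profit are polynomially computable from the input, which is immediate from the closed-form arithmetic on $F_y, F_x$ and from iterating the state update rule $(x_i, y_i)$ once per transaction.
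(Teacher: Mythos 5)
Your proposal is correct and follows essentially the same route as the paper, which also establishes \Cref{theorem: NP hard when f>0} by noting that any polynomial-time algorithm for the optimization problem could be simulated and its realized profit compared against $M(s_0,\{\TX^i\}_{i\in[n]})$ to solve the decision problem of \Cref{theorem: NP hard of decision}. The only difference is that you spell out the routine details (computing $M$, simulating the strategy, invoking \Cref{lemma: upper bound}) that the paper leaves implicit in a single sentence.
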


\section{Discussion and Open Problems}
\label{sec: discussion}

\noindent\textbf{Refined Sequencing Rule.} Our first question is related to mechanism design, motivated by a revisit of our polynomial time algorithm when $f=0$. Recall that our algorithm can always obtain the upper bound profits, even if the miner is asked to follow the greedy sequencing rule such that the sequence is additionally under a descending order. Thus, we would like to ask if there is some sequencing rule (that is computationally efficient and verifiable) that can further mitigate the miner's incentive to manipulate. We propose the following way to build a theoretical foundation when considering real-world applications. We could consider the case where users' transactions are drawn from a certain distribution $\calD$ (witnessed by real-world DeFi scenarios), and show that under the refined greedy sequencing rule, miners cannot obtain large profits with high probability. We leave it as a promising open question. 

\vspace{1.5mm}

\noindent\textbf{Approximation Algorithm for Miners.}
It is also worth to study about approximation algorithm design for miners. 
Our NP-hardness rules out the possibility for a miner to have a polynomial time algorithm for an optimal strategy (assuming P$\neq$NP). 
However, it remains possible to design a polynomial time algorithm with a good approximation guarantee. 
This strategy exploration allows miners to develop efficient algorithms that can yield sufficient MEV close to the optimal strategy.
As the optimal MEV problem shares a similar spirit with the Knapsack problem, one promising direction is to apply the classic approximation algorithms to our setting.

\vspace{1.5mm}
\noindent\textbf{User's Strategies.}
The third question is about strategic analysis from the perspective of users. In this work, we systematically studied the optimal strategies of miners. We also note that there is fruitful space for a user to adopt strategies. For example, a user who wants to sell a large amount of $\calX$ tokens may have an incentive to split it into several smaller transactions, and this may lead them to a higher profit under the greedy sequencing rule. Generally speaking, we wonder what is an optimal strategy for a user under certain sequencing rules. Different from the miner's incentive, multiple users are making decisions simultaneously, which forms a multi-agent system. One step further than one user's optimization, we ask what the equilibrium is when all users behave strategically.
The game theory problem between users and miners under specific sequencing rules is also an intriguing question.

\vspace{1.5mm}
\noindent\textbf{Other Scenarios where MEV Makes Everyone Happy.}
Finally, recall our exciting journey about the positive effects of MEV: when a miner attracts MEV (optimally), users are also benefited in a reasonable sense (\Cref{corollary: liveness} and \Cref{corollary: good profit}). 
The intuition behind this phenomenon is that although the existence of MEV incentivizes miners to engage in attacking behaviors when a good sequencing rule can restrict miners' actions and prevent them from affecting users' profits, the presence of MEV itself can benefit users.
In this case, MEV not only does not harm users but can expedite the execution of user transactions as miners have the motivation to execute more transactions (to obtain MEV).
We expect and are eager to know a wider range of scenarios where the same conceptual result also holds. We leave this as the most important future work.


\bibliographystyle{IEEEtran}

\begin{thebibliography}{10}
\providecommand{\url}[1]{#1}
\csname url@samestyle\endcsname
\providecommand{\newblock}{\relax}
\providecommand{\bibinfo}[2]{#2}
\providecommand{\BIBentrySTDinterwordspacing}{\spaceskip=0pt\relax}
\providecommand{\BIBentryALTinterwordstretchfactor}{4}
\providecommand{\BIBentryALTinterwordspacing}{\spaceskip=\fontdimen2\font plus
\BIBentryALTinterwordstretchfactor\fontdimen3\font minus
  \fontdimen4\font\relax}
\providecommand{\BIBforeignlanguage}[2]{{%
\expandafter\ifx\csname l@#1\endcsname\relax
\typeout{** WARNING: IEEEtran.bst: No hyphenation pattern has been}%
\typeout{** loaded for the language `#1'. Using the pattern for}%
\typeout{** the default language instead.}%
\else
\language=\csname l@#1\endcsname
\fi
#2}}
\providecommand{\BIBdecl}{\relax}
\BIBdecl

\bibitem{defillama}
\BIBentryALTinterwordspacing
DeFiLlama, ``Defi overview,'' August 1st 2023. [Online]. Available:
  \url{https://defillama.com/}
\BIBentrySTDinterwordspacing

\bibitem{defi-tracker}
\BIBentryALTinterwordspacing
DeFiprime, ``Decentralized exchanges trading volume,'' August 1st 2023.
  [Online]. Available: \url{https://defiprime.com/dex-volume}
\BIBentrySTDinterwordspacing

\bibitem{adams2021uniswap}
\BIBentryALTinterwordspacing
H.~Adams, N.~Zinsmeister, M.~Salem, R.~Keefer, and D.~Robinson, ``Uniswap v3
  core,'' \emph{Tech. rep., Uniswap, Tech. Rep.}, 2021. [Online]. Available:
  \url{https://uniswap.org/whitepaper-v3.pdf}
\BIBentrySTDinterwordspacing

\bibitem{Daian20flash}
\BIBentryALTinterwordspacing
P.~Daian, S.~Goldfeder, T.~Kell, Y.~Li, X.~Zhao, I.~Bentov, L.~Breidenbach, and
  A.~Juels, ``Flash boys 2.0: Frontrunning in decentralized exchanges, miner
  extractable value, and consensus instability,'' in \emph{2020 {IEEE}
  Symposium on Security and Privacy, {SP} 2020, San Francisco, CA, USA, May
  18-21, 2020}.\hskip 1em plus 0.5em minus 0.4em\relax {IEEE}, 2020, pp.
  910--927. [Online]. Available:
  \url{https://doi.org/10.1109/SP40000.2020.00040}
\BIBentrySTDinterwordspacing

\bibitem{zhou21high}
\BIBentryALTinterwordspacing
L.~Zhou, K.~Qin, C.~F. Torres, D.~V. Le, and A.~Gervais, ``High-frequency
  trading on decentralized on-chain exchanges,'' in \emph{42nd {IEEE} Symposium
  on Security and Privacy, {SP} 2021, San Francisco, CA, USA, 24-27 May
  2021}.\hskip 1em plus 0.5em minus 0.4em\relax {IEEE}, 2021, pp. 428--445.
  [Online]. Available: \url{https://doi.org/10.1109/SP40001.2021.00027}
\BIBentrySTDinterwordspacing

\bibitem{xavier2023credible}
\BIBentryALTinterwordspacing
M.~V.~X. Ferreira and D.~C. Parkes, ``Credible decentralized exchange design
  via verifiable sequencing rules,'' in \emph{Proceedings of the 55th Annual
  {ACM} Symposium on Theory of Computing, {STOC} 2023, Orlando, FL, USA, June
  20-23, 2023}, B.~Saha and R.~A. Servedio, Eds.\hskip 1em plus 0.5em minus
  0.4em\relax {ACM}, 2023, pp. 723--736. [Online]. Available:
  \url{https://doi.org/10.1145/3564246.3585233}
\BIBentrySTDinterwordspacing

\bibitem{bartoletti22maximizing}
\BIBentryALTinterwordspacing
M.~Bartoletti, J.~H. Chiang, and A.~Lluch{-}Lafuente, ``Maximizing extractable
  value from automated market makers,'' in \emph{Financial Cryptography and
  Data Security - 26th International Conference, {FC} 2022, Grenada, May 2-6,
  2022, Revised Selected Papers}, ser. Lecture Notes in Computer Science,
  I.~Eyal and J.~A. Garay, Eds., vol. 13411.\hskip 1em plus 0.5em minus
  0.4em\relax Springer, 2022, pp. 3--19. [Online]. Available:
  \url{https://doi.org/10.1007/978-3-031-18283-9\_1}
\BIBentrySTDinterwordspacing

\bibitem{Kelkar0GJ20}
\BIBentryALTinterwordspacing
M.~Kelkar, F.~Zhang, S.~Goldfeder, and A.~Juels, ``Order-fairness for byzantine
  consensus,'' in \emph{Advances in Cryptology - {CRYPTO} 2020 - 40th Annual
  International Cryptology Conference, {CRYPTO} 2020, Santa Barbara, CA, USA,
  August 17-21, 2020, Proceedings, Part {III}}, ser. Lecture Notes in Computer
  Science, D.~Micciancio and T.~Ristenpart, Eds., vol. 12172.\hskip 1em plus
  0.5em minus 0.4em\relax Springer, 2020, pp. 451--480. [Online]. Available:
  \url{https://doi.org/10.1007/978-3-030-56877-1\_16}
\BIBentrySTDinterwordspacing

\bibitem{KelkarDK22Order-Fair}
\BIBentryALTinterwordspacing
M.~Kelkar, S.~Deb, and S.~Kannan, ``Order-fair consensus in the permissionless
  setting,'' in \emph{{APKC} '22: Proceedings of the 9th {ACM} on {ASIA}
  Public-Key Cryptography Workshop, APKC@AsiaCCS 2022, Nagasaki, Japan, 30 May
  2022}, J.~P. Cruz and N.~Yanai, Eds.\hskip 1em plus 0.5em minus 0.4em\relax
  {ACM}, 2022, pp. 3--14. [Online]. Available:
  \url{https://doi.org/10.1145/3494105.3526239}
\BIBentrySTDinterwordspacing

\bibitem{CachinMSZ22}
\BIBentryALTinterwordspacing
C.~Cachin, J.~Micic, N.~Steinhauer, and L.~Zanolini, ``Quick order fairness,''
  in \emph{Financial Cryptography and Data Security - 26th International
  Conference, {FC} 2022, Grenada, May 2-6, 2022, Revised Selected Papers}, ser.
  Lecture Notes in Computer Science, I.~Eyal and J.~A. Garay, Eds., vol.
  13411.\hskip 1em plus 0.5em minus 0.4em\relax Springer, 2022, pp. 316--333.
  [Online]. Available: \url{https://doi.org/10.1007/978-3-031-18283-9\_15}
\BIBentrySTDinterwordspacing

\bibitem{vafadar2023condorcet}
\BIBentryALTinterwordspacing
M.~A. Vafadar and M.~Khabbazian, ``Condorcet attack against fair transaction
  ordering,'' \emph{CoRR}, vol. abs/2306.15743, 2023. [Online]. Available:
  \url{https://doi.org/10.48550/arXiv.2306.15743}
\BIBentrySTDinterwordspacing

\bibitem{MalkhiS22}
\BIBentryALTinterwordspacing
D.~Malkhi and P.~Szalachowski, ``Maximal extractable value {(MEV)} protection
  on a {DAG},'' in \emph{4th International Conference on Blockchain Economics,
  Security and Protocols, Tokenomics 2022, December 12-13, 2022, Paris,
  France}, ser. OASIcs, Y.~Amoussou{-}Guenou, A.~Kiayias, and M.~Verdier, Eds.,
  vol. 110.\hskip 1em plus 0.5em minus 0.4em\relax Schloss Dagstuhl -
  Leibniz-Zentrum f{\"{u}}r Informatik, 2022, pp. 6:1--6:17. [Online].
  Available: \url{https://doi.org/10.4230/OASIcs.Tokenomics.2022.6}
\BIBentrySTDinterwordspacing

\bibitem{Sikka}
\BIBentryALTinterwordspacing
Sikka, August 1st 2023. [Online]. Available: \url{https://sikka.tech/projects/}
\BIBentrySTDinterwordspacing

\bibitem{bracciali19transparent}
\BIBentryALTinterwordspacing
S.~Eskandari, S.~Moosavi, and J.~Clark, ``Sok: Transparent dishonesty:
  Front-running attacks on blockchain,'' in \emph{Financial Cryptography and
  Data Security - {FC} 2019 International Workshops, {VOTING} and WTSC, St.
  Kitts, St. Kitts and Nevis, February 18-22, 2019, Revised Selected Papers},
  ser. Lecture Notes in Computer Science, A.~Bracciali, J.~Clark, F.~Pintore,
  P.~B. R{\o}nne, and M.~Sala, Eds., vol. 11599.\hskip 1em plus 0.5em minus
  0.4em\relax Springer, 2019, pp. 170--189. [Online]. Available:
  \url{https://doi.org/10.1007/978-3-030-43725-1\_13}
\BIBentrySTDinterwordspacing

\bibitem{BabelDKJ23}
\BIBentryALTinterwordspacing
K.~Babel, P.~Daian, M.~Kelkar, and A.~Juels, ``Clockwork finance: Automated
  analysis of economic security in smart contracts,'' in \emph{44th {IEEE}
  Symposium on Security and Privacy, {SP} 2023, San Francisco, CA, USA, May
  21-25, 2023}.\hskip 1em plus 0.5em minus 0.4em\relax {IEEE}, 2023, pp.
  2499--2516. [Online]. Available:
  \url{https://doi.org/10.1109/SP46215.2023.10179346}
\BIBentrySTDinterwordspacing

\bibitem{kulkarni22towards}
\BIBentryALTinterwordspacing
K.~Kulkarni, T.~Diamandis, and T.~Chitra, ``Towards a theory of maximal
  extractable value {I:} constant function market makers,'' \emph{CoRR}, vol.
  abs/2207.11835, 2022. [Online]. Available:
  \url{https://doi.org/10.48550/arXiv.2207.11835}
\BIBentrySTDinterwordspacing

\bibitem{zust2021analyzing}
\BIBentryALTinterwordspacing
P.~Z{\"u}st, T.~Nadahalli, and Y.~W.~R. Wattenhofer, ``Analyzing and preventing
  sandwich attacks in ethereum,'' \emph{ETH Z{\"u}rich}, 2021. [Online].
  Available: \url{https://pub.tik.ee.ethz.ch/students/2021-FS/BA-2021-07.pdf}
\BIBentrySTDinterwordspacing

\bibitem{heimbach22eliminating}
\BIBentryALTinterwordspacing
L.~Heimbach and R.~Wattenhofer, ``Eliminating sandwich attacks with the help of
  game theory,'' in \emph{{ASIA} {CCS} '22: {ACM} Asia Conference on Computer
  and Communications Security, Nagasaki, Japan, 30 May 2022 - 3 June 2022},
  Y.~Suga, K.~Sakurai, X.~Ding, and K.~Sako, Eds.\hskip 1em plus 0.5em minus
  0.4em\relax {ACM}, 2022, pp. 153--167. [Online]. Available:
  \url{https://doi.org/10.1145/3488932.3517390}
\BIBentrySTDinterwordspacing

\bibitem{zhou21a2mm}
\BIBentryALTinterwordspacing
L.~Zhou, K.~Qin, and A.~Gervais, ``{A2MM:} mitigating frontrunning, transaction
  reordering and consensus instability in decentralized exchanges,''
  \emph{CoRR}, vol. abs/2106.07371, 2021. [Online]. Available:
  \url{https://arxiv.org/abs/2106.07371}
\BIBentrySTDinterwordspacing

\bibitem{flashbots}
\BIBentryALTinterwordspacing
Flashbots, August 1st 2023. [Online]. Available:
  \url{https://docs.flashbots.net/}
\BIBentrySTDinterwordspacing

\bibitem{Eden}
\BIBentryALTinterwordspacing
Eden, August 1st 2023. [Online]. Available: \url{https://www.edennetwork.io/}
\BIBentrySTDinterwordspacing

\bibitem{OpenMEV}
\BIBentryALTinterwordspacing
OpenMEV, August 1st 2023. [Online]. Available: \url{https://openmev.xyz/}
\BIBentrySTDinterwordspacing

\end{thebibliography}

\end{document}